\documentclass[runningheads]{llncs}

\usepackage[T1]{fontenc}
\usepackage{graphicx}
\usepackage{amsmath,amssymb}
\usepackage{booktabs}
\usepackage{subcaption}
\usepackage{macros}

\usepackage{hyperref}

\usepackage[capitalise,noabbrev]{cleveref}

\begin{document}

\title{Stability Buys Time: A Re-Keying Game for Encrypted Multi-Agent Control}
\titlerunning{Stability Buys Time: A Re-Keying Game for Encrypted Control}

\author{Sai Sandeep Damera\orcidID{0000-0003-3419-9719} \and
John S. Baras\orcidID{0000-0002-4955-8561}}
\authorrunning{S. Damera and J. Baras}
\institute{University of Maryland, College Park, USA\\
\email{\{sdamera,baras\}@umd.edu}}

\maketitle

\begin{abstract}
Encrypted control lets a cloud coordinate a fleet of agents on fully homomorphically encrypted state, keeping their positions and commands private. The approximate scheme for real-valued control, CKKS, returns decryptions that carry the encryption noise, a key-recovery leak; the loop must decrypt to actuate, so the leak is unavoidable. Yet the security of approximate FHE is studied statically, encrypted control assumes an honest-but-curious cloud, and persistent-threat games never reach inside the cryptosystem. We model the loop's security under an advanced persistent threat as a two-phase game, passive reconnaissance then active manipulation, separated by a measured residual detector that sees only the manipulation. The passive phase reduces to the known flooding tradeoff; the active defense is re-keying, not bootstrapping, since only re-keying resets accumulated leakage. The active phase is a detection-evasion timing game: overt manipulation is caught, so the rational adversary stays stealthy, and at its Stackelberg equilibrium the defender re-keys on the laziest cadence that denies it, set by the control-theoretic fragility of the graph topology. The marginally-stable graph must re-key far more often than the well-connected one. A three-way tension among FHE precision, control accuracy, and re-key cadence sets where this game lives, between a securability floor and a static-suffices ceiling. The efficient secure point is that window, where re-keying is the price of precision efficiency. More broadly, security for an approximate cryptosystem in a feedback loop is a dynamic game whose defender's move is the scheme's own refresh, applying beyond control to any system that must repeatedly decrypt to act.
\keywords{Fully Homomorphic Encryption \and Encrypted Control \and Advanced Persistent Threats \and Game-theoretic Security \and Stackelberg Games.}
\end{abstract}

\section{Introduction}
\label{apte:sec:intro}

\begin{figure}[t]
\centering
\includegraphics[width=\textwidth]{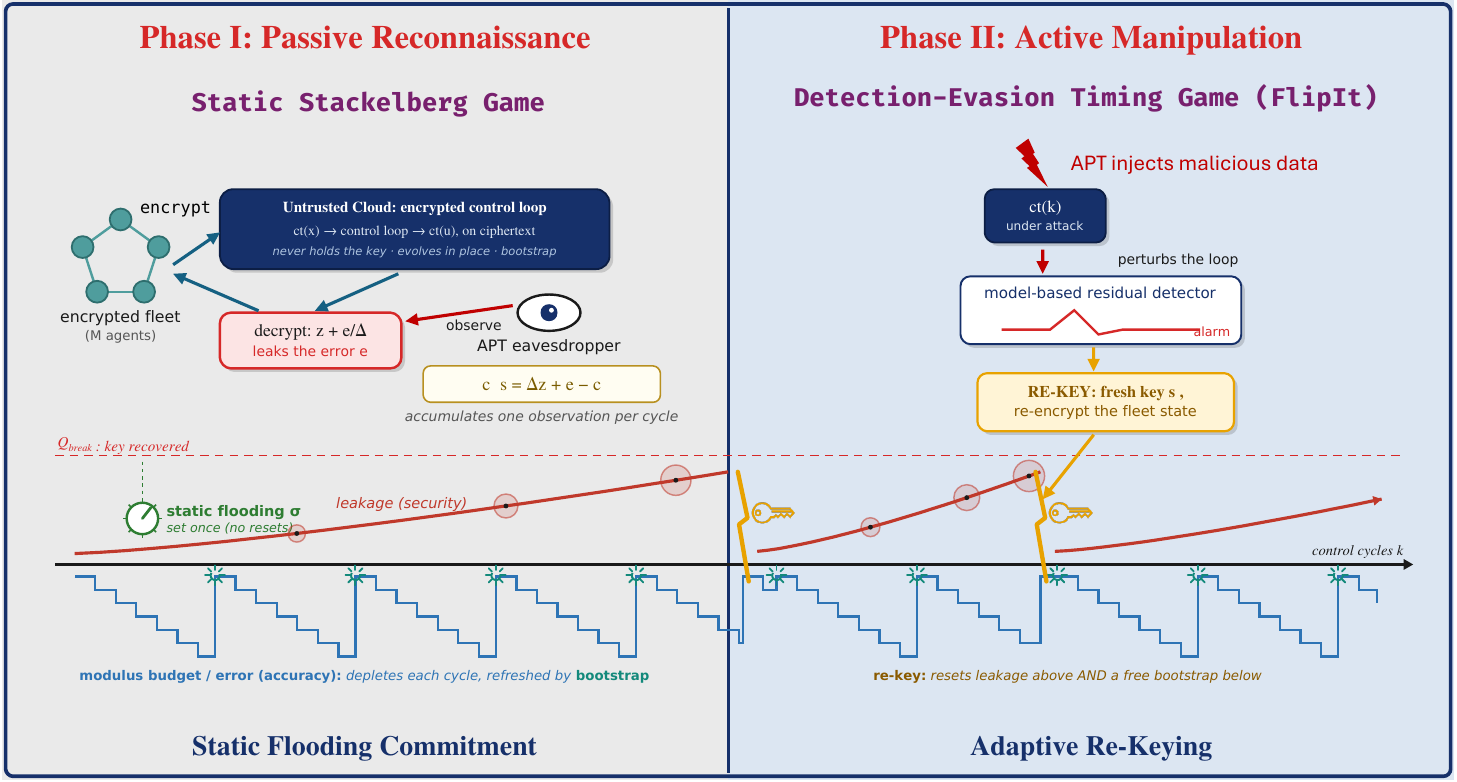}
\caption{Overview of the two-phase advanced persistent threat on an encrypted control loop. \textbf{Phase~I (passive reconnaissance, left):} the cloud evolves the fleet's CKKS ciphertext in place, and each unavoidable decryption releases $z + e/\Delta$, leaking a noisy constraint $c_1 s = \Delta z + e - c_0$ on the secret key $s$. The eavesdropper accumulates one observation per cycle while the defender commits a static flooding level $\sigma \le \sigma_{\max}$, a Stackelberg baseline, and the leakage ledger climbs toward the recovery threshold $Q_{\mathrm{break}}(\sigma_{\max})$. \textbf{Phase~II (active manipulation, right):} the adversary perturbs the computation, the perturbation surfaces in the control residuals, and the defender re-keys, resetting the ledger to zero; this detection-evasion cadence is a timing game of the FlipIt family. Bootstrapping refreshes noise under the same key and does not reset the ledger. Whether re-keying is needed is set by the FHE precision, which splits into a securability floor, a re-key window, and a regime where static flooding already suffices.}
\label{apte:fig:overview}
\end{figure}

Autonomous fleets increasingly offload coordination to the cloud: a server aggregates the agents' states and returns their control commands. When the agents operate in sensitive settings, those states and commands are private, and fully homomorphic encryption (FHE) lets the cloud run the entire control loop on encrypted data, so that neither the server nor a network eavesdropper sees plaintext~\cite{kim2016encrypting,9438025}. Recent work carries this from single encrypted controllers to multi-agent fleets that coordinate over encrypted channels for an unbounded horizon~\cite{damera2026endtoendencryptedcontrolpipeline}.

Confidentiality of the computation is not the end of the story. The approximate scheme used for real-valued control, CKKS, returns a decryption that still carries the encryption noise~\cite{cheon2017homomorphic}. That noise is not merely an accuracy nuisance. A line of attacks shows it is a security leak: an adversary who observes decryptions can reconstruct the secret key~\cite{li2021security,guo2024key,cheon2024attacks}. In an encrypted fleet the decryptions are unavoidable, since the actuators must decrypt to move, so the leak is intrinsic to operating the loop rather than an optional exposure. Three bodies of work bear on this and none addresses it. The security of approximate FHE has been studied as a static property, namely how much noise to flood into a single released decryption~\cite{li2022securing,bergamaschi2025revisiting}. The encrypted-control literature assumes an honest-but-curious cloud that follows the protocol~\cite{9438025,schluter2024code}. The game-theoretic study of persistent threats has never reached inside the cryptosystem~\cite{van2013flipit,pawlick2015flip}. A fleet facing a real, adaptive adversary sits in the intersection of all three.

We model that adversary as an advanced persistent threat (APT) acting in two phases, each calling for a different defense (\cref{apte:fig:overview}). This two-phase decomposition follows a line of game-theoretic APT models for robotic autonomy that separate a stealthy penetration phase from an active damage phase~\cite{zoulkarni2024defending,zoulkarni2025probabilistic}; our contribution is to carry it inside the cryptosystem, where the contested asset is the secret key. In the passive phase the adversary only listens: it taps the encrypted traffic and watches the agents move, accumulating partial information about the key while staying near-undetectable, because honest decryptions reveal nothing anomalous. Here the defender can only commit a baseline ahead of time, and the natural object is a static commitment of the noise budget. In the active phase the APT acts: it manipulates the encrypted computation to accelerate key recovery or to disrupt the fleet, and the instant it does, the manipulation perturbs the control loop and surfaces in the control residuals~\cite{pasqualetti2013attack}. Now the defender can adapt. The transition from listening to acting is therefore also the transition from undetectable to detectable, and it is what separates the two games.

Our central modeling point is that the defender's adaptive move is re-keying, generating a fresh key and re-encrypting the state, and not the bootstrapping that an encrypted loop already performs. Bootstrapping refreshes the noise under the same key and leaves the adversary's accumulated knowledge intact; re-keying replaces the key and resets it. Choosing the cadence of re-keying against an APT that is racing to recover the key, while reacting to control-residual detection once it turns active, is a timing game of the FlipIt family~\cite{van2013flipit,zheng2017reset} played, to our knowledge for the first time, on the internals of an encrypted control loop. Two moves set this apart from prior work: the defender's strategy is the cryptosystem's own re-keying cadence, not the flooding of a single decryption~\cite{li2022securing} or the reclaiming of an abstract resource~\cite{van2013flipit}, and that cadence is fixed by the control loop's stability. Whether re-keying is needed at all, we find, depends on the FHE precision: a static flooding baseline suffices when precision is high, and the dynamic defense becomes necessary only in a bounded precision-efficient regime, which is where our analysis concentrates.

We make the following contributions.
\begin{itemize}
  \item We formulate the security of an encrypted multi-agent control loop as a two-phase APT game, with a passive reconnaissance phase and an active manipulation phase separated by control-residual detectability, under a threat model (an external eavesdropper augmented by physical observation of a subset of the agents) that the deployed single-key, monolithic pipeline admits (\cref{apte:sec:threat}).
  \item We frame the passive phase as a static flooding commitment, calibrated by the established noise-flooding tradeoff~\cite{li2022securing,bergamaschi2025revisiting}, which fixes the leakage rate the active re-keying game consumes (\cref{apte:sec:phase1}).
  \item As our central result, we identify a three-way design tension among FHE precision, control accuracy, and re-key cadence, and we map the precision regimes it induces: at high precision a static flooding baseline suffices and re-keying is unnecessary; in a bounded precision-efficient window the defender must re-key, and the cadence is an interior detection-evasion timing game in the FlipIt family~\cite{van2013flipit,zheng2017reset}; below the window static security fails. Re-keying, not the bootstrapping the loop already performs, is the security-relevant refresh (\cref{apte:sec:phase2}).
  \item We instantiate the framework on a nine-agent fleet across ring, torus, and complete topologies on the OpenFHE-based pipeline of~\cite{damera2026endtoendencryptedcontrolpipeline}, measuring the control-tolerable flooding $\sigma_{\max}$, which tracks closed-loop stability and is therefore topology-dependent, and the resulting regime map (\cref{apte:sec:experiments}).
\end{itemize}

\section{Related Work}
\label{apte:sec:related}

Three threads border this work: encrypted control, the security of approximate FHE, and game-theoretic models of persistent threats. We position against each: we inherit the leakage primitive of the FHE-security thread and the cadence structure of the persistent-threat thread, but unlike either we make the cryptosystem's re-keying schedule the defender's strategy and let the encrypted control loop's stability set it.

\paragraph{Encrypted control.}
Encrypted control began with networked controllers built on partially homomorphic encryption~\cite{kogiso2015cyber,farokhi2016secure} and moved to fully homomorphic controllers soon after~\cite{kim2016encrypting}, then to cloud-based encrypted model predictive control~\cite{alexandru2018cloud}, with tutorial treatments consolidating the area~\cite{9438025,schluter2024code}. More recent work addresses cooperative and distributed settings through encrypted optimization~\cite{binfet2023towards}, sustains an unbounded operating horizon through bootstrapping with stability guarantees~\cite{schlor2024bootstrapping}, and scales to an end-to-end multi-agent CKKS fleet pipeline~\cite{damera2026endtoendencryptedcontrolpipeline}. What unites this line is confidentiality of the loop against an honest-but-curious cloud that follows the protocol. We depart from it by modeling an active, adaptive adversary, and by asking not whether the loop is private but how long its key survives an adversary that exploits the unavoidable decryptions, and how the defender should re-key in response. The closest in spirit is the faithful encrypted consensus of~\cite{teranishi2025faithful}, which brings a game-theoretic incentive objective to encrypted coordination, but with additively homomorphic encryption and a mechanism-design goal rather than a leakage and refresh game on CKKS.

\paragraph{Security of approximate FHE.}
Li and Micciancio showed that CKKS decryptions leak the secret key and introduced the stronger IND-CPA\textsuperscript{D} security notion to capture it~\cite{li2021security}, with a noise-flooding countermeasure calibrated through differential privacy~\cite{li2022securing}. Subsequent work extended the attacks to exact and threshold schemes through imperfect correctness~\cite{cheon2024attacks} and recovered keys under non-worst-case flooding from very few decryptions on a production library~\cite{guo2024key}, while the concrete security of flooding against a bounded query budget has been mapped explicitly~\cite{bergamaschi2025revisiting}. This thread is static: it fixes the security of a single decryption or a fixed number of queries. We take the leakage it establishes as a primitive and study the dynamic problem it creates inside a control loop, namely how a defender should schedule re-keying against an adversary that accumulates leakage over the operating horizon, a question the static definitions do not pose. We use the flooding tradeoff of~\cite{li2022securing,bergamaschi2025revisiting} as the calibrated input to our passive phase.

\paragraph{Game-theoretic models of persistent threats.}
FlipIt cast the contest for a stealthily compromised resource as a timing game in which a defender periodically reclaims control~\cite{van2013flipit}, a model later specialized to cloud advanced persistent threats through signaling~\cite{pawlick2015flip}, to optimal key-reset timing via learning~\cite{zheng2017reset}, to resource takeovers in dynamical systems~\cite{banik2022flipdyn}, to intrusion prevention as optimal stopping~\cite{hammar2022learning}, and to Bayesian models of insider threats~\cite{jiao2025bg}. What unites these is the timing and stealth structure of a long-running threat against a defender who refreshes a contested resource, yet in all of them the resource is abstract and none reaches inside the cryptosystem. Closest to our formulation is a line of multi-phase game-theoretic models of APTs against robotic autonomy, which split an attack into a stealthy penetration phase and an active damage phase and solve each as a separate game~\cite{zoulkarni2024defending,zoulkarni2025probabilistic}. Those models act on the autonomy stack, the attack tree of the penetration and an anomaly detector on the control residuals; we inherit the two-phase decomposition but place both games inside the cryptosystem, where the contested resource is the secret key and the phase boundary is the leakage that the unavoidable decryptions reveal. We instantiate the FlipIt cadence on the re-keying of an encrypted control loop, with the move cost (re-encrypting the fleet state) and the leakage-accumulation rate derived from the control pipeline rather than assumed, and the active phase made observable through residual-based detection of control-loop manipulation~\cite{pasqualetti2013attack}.

\section{System Model and Preliminaries}
\label{apte:sec:prelim}

We study a fleet of agents that coordinate through a cloud which never sees their plaintext state. Each agent encrypts its measurements once, the cloud evaluates the coordination law homomorphically on CKKS ciphertexts, and only the actuators decrypt, so the data stays confidential across the network and on the cloud. This section fixes the plant and its cooperative control law, introduces the CKKS scheme, presents the encrypted control loop together with the two structural properties on which the analysis depends, and fixes notation. The adversary is introduced separately in \cref{apte:sec:threat}.

\subsection{Multi-agent fleet and cooperative control}
\label{apte:subsec:fleet}

The fleet has $M$ agents indexed by $\mathcal{V} = \{1, \dots, M\}$. Agent $i$ carries a state $x_i(k) \in \R^{n}$ at discrete time $k$ and obeys linear dynamics
\begin{equation}
\label{apte:eq:agent}
x_i(k+1) = A\, x_i(k) + B\, u_i(k),
\qquad A \in \R^{n \times n}, \; B \in \R^{n \times m},
\end{equation}
with collective state $x(k) = (x_1(k)^\top, \dots, x_M(k)^\top)^\top \in \R^{Mn}$. The agents coordinate over an undirected, connected communication graph $\mathcal{G} = (\mathcal{V}, \mathcal{E})$ with neighbor sets $\mathcal{N}_i = \{ j : (i,j) \in \mathcal{E} \}$ and graph Laplacian $L \in \R^{M \times M}$. We consider three topologies that span a useful range of algebraic connectivity: the ring, the torus, and the complete graph $K_M$.

Coordination drives the fleet to consensus, or to a fixed formation offset, through Laplacian feedback. Folding the local control law into the collective dynamics gives the one-step map
\begin{equation}
\label{apte:eq:consensus}
x(k+1) = \bigl( I_{Mn} - \varepsilon\, (L \otimes I_n) \bigr)\, x(k),
\end{equation}
with step size $\varepsilon > 0$ chosen so that the closed loop is Schur stable~\cite{olfati2004consensus}. We work throughout in this \emph{contractive regime}: the collective map has spectral radius below one, disagreement contracts geometrically, and the encrypted loop can run for an unbounded horizon. The contraction rate is governed by the algebraic connectivity of $\mathcal{G}$, which is the channel through which topology will enter our later bounds.

\subsection{The CKKS scheme}
\label{apte:subsec:ckks}

CKKS~\cite{cheon2017homomorphic} encrypts a real vector $z$ into a ciphertext pair $\mathsf{ct} = (c_0, c_1) \in R_q^2$ over the ring $R_q = \Z_q[X]/(X^{N}+1)$, where $N$ is the ring dimension (the polynomial degree, distinct from the per-agent state dimension $n$) and $q$ the ciphertext modulus. With secret key $s$ (ternary coefficients) and scale $\Delta$, the pair satisfies
\begin{equation}
\label{apte:eq:ckks}
c_0 + c_1\, s = \Delta\, z + e \pmod{q},
\end{equation}
where $e$ is a small error. Decryption computes $c_0 + c_1 s$, divides by $\Delta$, and returns
\begin{equation}
\label{apte:eq:dec}
\tilde{z} = z + e/\Delta,
\end{equation}
the message contaminated by the rescaled error. CKKS is approximate by design: the noise rides inside the plaintext, and a decryption never returns $z$ exactly.

Three properties of the scheme are central to our analysis. First, the released value $\tilde{z}$ in \cref{apte:eq:dec} is the only plaintext the system ever releases, and it carries the error $e$; this is the leakage channel analyzed in \cref{apte:sec:phase1}. Second, the scheme is additively and plaintext-multiplicatively homomorphic: we write $\oplus$ for encrypted addition and $\odot$ for the product of a plaintext matrix with a ciphertext, extended to matrix-vector form, with rotations supplying the cross-slot data movement that a Laplacian requires~\cite{9438025,kim2016encrypting}. Third, every operation grows the error and consumes modulus; \emph{bootstrapping} refreshes a ciphertext to a high modulus under the \emph{same} key $s$, which is what lets an encrypted controller run indefinitely~\cite{schlor2024bootstrapping}. Bootstrapping changes the noise, not the key: the secret $s$ is invariant across bootstraps. This will distinguish it sharply from re-keying in \cref{apte:sec:phase2}.

\subsection{The encrypted fleet-control loop}
\label{apte:subsec:loop}

We adopt the end-to-end pipeline of~\cite{damera2026endtoendencryptedcontrolpipeline}, realized on the OpenFHE.jl stack~\cite{schlottkelakemper2024openfhejulia}. The collective state is packed into a single ciphertext under one shared key, encrypted once at the sensing layer, and then evolved \emph{in place}: the cloud applies the coordination map of \cref{apte:eq:consensus} as a plaintext-matrix product on the ciphertext,
\begin{equation}
\label{apte:eq:encloop}
\mathsf{ct}(k+1) = \bigl( I_{Mn} - \varepsilon\, (L \otimes I_n) \bigr) \odot \mathsf{ct}(k),
\end{equation}
inserting a bootstrap whenever the modulus budget runs low, and each actuator decrypts its command to act. Two structural properties of this pipeline govern the threat model.

\begin{description}
\item[Single key.] The whole fleet operates under one secret key $s$, shared by construction through the pipeline's key setup~\cite{damera2026endtoendencryptedcontrolpipeline}. There is no per-agent or threshold key.
\item[Monolithic ciphertext.] The fleet state lives in one ciphertext and is decrypted in a single operation; the released plaintext is the entire coordination vector, observed at the actuators.
\end{description}

A third property follows from the in-place evolution and governs how information accumulates under attack. Because the loop evolves one ciphertext rather than re-encrypting at each step, the ciphertexts released over a horizon are homomorphic transforms of a shared encryption randomness. Fresh randomness re-enters only through new sensor measurements, each a fresh encryption, and through re-keying; bootstrapping, a fixed-key circuit, refreshes the noise without injecting fresh secret-bearing randomness. We return to this property in \cref{apte:sec:phase2}, where it sets the rate at which an attacker's knowledge of $s$ grows.

\subsection{Standing assumptions and notation}
\label{apte:subsec:standing}

We make three standing assumptions, each matching the pipeline of \cref{apte:subsec:loop}.

\begin{description}
\item[(A1) Contractive loop.] The closed loop \cref{apte:eq:consensus} is Schur stable. This gives the encrypted loop an unbounded horizon and is the regime in which the refresh-timing decisions of \cref{apte:sec:phase2} are meaningful.
\item[(A2) Single-key monolithic pipeline.] Encryption and decryption are monolithic under one key $s$, per \cref{apte:subsec:loop}.
\item[(A3) Bounded states.] Along admissible trajectories $\| x_i(k) \| \le \bar{x}$ for a known $\bar{x}$, as enforced by the physical ranges of the fleet.
\end{description}

Throughout, $k$ denotes discrete time, $M$ the number of agents, $n$ and $m$ the per-agent state and input dimensions, $N$ the CKKS ring dimension, $q$ the modulus, $\Delta$ the scale, $s$ the secret key, and $e$ the ciphertext error; $\oplus$ and $\odot$ denote homomorphic addition and plaintext-ciphertext multiplication, and $\tilde{z} = z + e/\Delta$ the released approximate decryption.

\section{Threat Model and the Two-Phase Game}
\label{apte:sec:threat}

The pipeline of \cref{apte:sec:prelim} is confidential against an honest-but-curious cloud that follows the protocol. We ask instead what a persistent adversary that eventually turns active can do. This section fixes the adversary's capabilities, the defender's, the leakage the adversary accumulates, and the two-phase structure that organizes the contest. The two games are solved in \cref{apte:sec:phase1} (passive) and \cref{apte:sec:phase2} (active).

\subsection{Adversary}
\label{apte:subsec:adversary}

We model the adversary as an advanced persistent threat: patient, computationally bounded, and willing to remain dormant. It has two capabilities. First, it taps the encrypted traffic, so it holds the monolithic ciphertext stream $\{(c_0(k), c_1(k))\}$. Second, it physically observes a subset $\mathcal{C} \subseteq \mathcal{V}$ of the agents, $\kappa = |\mathcal{C}|$, learning the commands those agents apply, that is, $\kappa$ of the $M$ plaintext slots each cycle.

The adversary acts in two phases. In the \emph{passive} phase it only listens, eavesdropping and observing, and accumulates information about the secret key $s$ while staying near-undetectable: the loop's decryptions are honest, so nothing anomalous appears. In the \emph{active} phase it additionally manipulates the encrypted computation, perturbing ciphertexts or injecting false data through the agents it controls, to accelerate key recovery or to disrupt coordination. Manipulation makes this a chosen-ciphertext setting: rather than passively accepting the released constraints on $s$, the active adversary chooses them, for instance differencing manipulated decryptions to cancel unknown terms or steering them toward coordinates of $s$ it has not yet resolved, so each observation removes more uncertainty than a passive one and the active per-cycle leakage rate exceeds the passive baseline by a factor $a_s > 1$. The factor depends on the specific attack, and we treat it as a declared modeling parameter rather than measuring it. The same manipulation enters the control loop and surfaces in the residuals, so acceleration and detectability are two faces of one move. Chosen-ciphertext access is a stronger capability than passive eavesdropping, but it is precisely what the active phase grants, and what makes it detectable. The passive-to-active transition is thus the listening-to-acting transition, and the only point at which the adversary becomes detectable. Its goal is to recover $s$, which by the single-key structure of \cref{apte:subsec:loop} compromises the entire fleet, and to degrade coordination once active.

\subsection{The leakage primitive}
\label{apte:subsec:leakage}

From a tapped ciphertext $(c_0, c_1)$ and a known plaintext slot $z$, the relation \cref{apte:eq:ckks} gives
\begin{equation}
\label{apte:eq:leak}
c_1\, s = \Delta\, z + e - c_0 \pmod{q},
\end{equation}
a noisy linear constraint on $s$~\cite{li2021security}. With full plaintext knowledge ($\kappa = M$) this is the established known-plaintext key recovery, which succeeds from very few decryptions and is defeated only by flooding the released values~\cite{guo2024key,li2022securing}. With partial knowledge ($\kappa < M$) the unobserved slots leave part of the constraint unknown, a partial-information recovery whose difficulty grows as $\kappa$ shrinks; characterizing that dependence is the fleet-threshold question of \cref{apte:sec:phase2}.

How fast this leakage accumulates is settled by two facts that we measure rather than assume (\cref{apte:sec:experiments}). First, a single tapped ciphertext already yields a full-rank system in $s$, because $c_1$ is full-rank; the adversary is never rank-limited, and additional ciphertexts, fresh measurements, and bootstraps add no independent constraints. Leakage is therefore the de-flooding of a fixed full-rank system rather than the accumulation of new constraints, and the flooding level sets how many noisy observations are needed to resolve it. Second, the loop releases one decryption per cycle, so de-flooding proceeds at one observation per cycle. The recovery threshold at a given flooding level is the input to the active game of \cref{apte:sec:phase2}. This holds for full observation ($\kappa = M$); the partial-observation case is an extension (\cref{apte:subsec:threat_assumptions}).

\subsection{Defender}
\label{apte:subsec:defender}

The defender is the cooperative fleet operator, a single team; strategic or corrupted agents acting as players in their own right are out of scope (\cref{apte:sec:conclusion}). It observes two things: public metadata, namely the schedule and the timing of the ciphertext stream, which carry no plaintext; and the residuals of a model-based detector on the control loop~\cite{pasqualetti2013attack}, which flags manipulation. The detector sees the active phase but not the passive one, since passive eavesdropping leaves the loop's behavior unchanged. This asymmetry is what separates the two games.

The defender has two levers. The first is a flooding level $\sigma$ added to the released decryptions, committed ahead of time; in the passive phase, where the adversary is undetectable, this baseline commitment is the only available move. The second is re-keying: generating a fresh key and re-encrypting the fleet state, which resets the adversary's accumulated leakage at a cost, and which becomes an adaptive move once detection fires. Bootstrapping is not a security lever: by \cref{apte:subsec:ckks} it refreshes noise under the same key and leaves the leakage intact.

\subsection{The two-phase game}
\label{apte:subsec:twophase}

The security objective is to keep $s$ secret over the mission horizon while meeting the coordination spec, so that the flooding the defender applies does not push the control error past tolerance. The two phases are two games linked by the leakage ledger, the adversary's accumulated information about $s$.

In the passive phase (\cref{apte:sec:phase1}) the defender commits $\sigma$ ahead of time and the adversary best-responds with passive observation; we show this reduces to the static noise-flooding tradeoff~\cite{li2022securing,bergamaschi2025revisiting}, which calibrates the active game. In the active phase (\cref{apte:sec:phase2}) the adversary eventually manipulates the loop, is detected, and the defender re-keys; the contest is a detection-evasion timing game over the re-keying cadence against the adversary's accumulating leakage and its choice of when to risk detection, in the FlipIt family~\cite{van2013flipit,zheng2017reset}. The two phases are coupled through the recovery threshold fixed in the passive phase: it determines how soon re-keying is forced. Our measurement (\cref{apte:sec:experiments}) settles that the active game is non-degenerate, and it identifies a bounded window of FHE precision in which re-keying is operationally decisive; outside that window the static baseline either suffices or fails outright. We target that window in \cref{apte:sec:phase2}.

\subsection{Assumptions and scope}
\label{apte:subsec:threat_assumptions}

We write $\kappa = |\mathcal{C}|$ for the number of observed agents, distinct from the time index $k$; the infection fraction is $\kappa/M$. Four assumptions bound the model. (T1) Known plaintext from physical observation is the worst case for the defender; partial knowledge ($\kappa < M$) is the softening that yields the threshold. (T2) The pipeline is single-key and monolithic (\cref{apte:subsec:loop}), so there is one key to protect and one release channel. (T3) The adversary is computationally bounded, the regime in which flooding below the information-theoretic worst case is meaningful~\cite{bergamaschi2025revisiting}. (T4) The fleet is a cooperative defender; corrupted agents as strategic players are left to future work.

\section{Phase 1: The Passive Baseline}
\label{apte:sec:phase1}

In the passive phase the adversary only listens, so the defender cannot react to it and can only commit a defense in advance. We show that this commitment is governed by the established noise-flooding tradeoff for approximate FHE~\cite{li2022securing,bergamaschi2025revisiting}. The passive phase plays two roles in the framework: it fixes the leakage rate that the active game of \cref{apte:sec:phase2} consumes, and it identifies the condition under which the static baseline is insufficient and an adaptive defense becomes necessary.

\subsection{The passive game}
\label{apte:subsec:passive_game}

The defender's only passive move is the flooding level $\sigma$ added to the released decryptions (\cref{apte:subsec:defender}). The adversary's move is an observation budget: it gathers $Q$ independent decryption observations from the tapped stream and the agents it watches, up to a horizon budget $\overline{Q}$ fixed by how long it can stay passive. The conversion from control cycles to independent observations is one to one: a single ciphertext already yields a full-rank system, so each released cycle contributes exactly one de-flooding observation of it (\cref{apte:sec:experiments}). The interaction is a one-shot Stackelberg game: the defender commits $\sigma$, and the adversary, unobserved, best-responds by spending its full budget.

\subsection{The flooding tradeoff}
\label{apte:subsec:flooding}

Each released decryption gives the noisy constraint \cref{apte:eq:leak} on the key. Flooding adds noise to the released value, so more observations are needed to resolve the key against a fixed flooding level, and conversely a larger flooding level requires a larger observation budget. This is exactly the noise-flooding countermeasure for approximate FHE and its concrete security against a bounded query budget~\cite{li2021security,li2022securing,bergamaschi2025revisiting}. We take this tradeoff as an external input: let $Q_{\mathrm{break}}(\sigma)$ denote the number of independent observations required to recover $s$ at flooding level $\sigma$, a quantity that increases monotonically in $\sigma$~\cite{li2022securing,bergamaschi2025revisiting}.

\subsection{Stackelberg solution and feasibility}
\label{apte:subsec:passive_solution}

The defender also faces a control constraint: flooding the released command perturbs actuation, so $\sigma$ cannot exceed the largest level the coordination loop tolerates within its performance spec, denoted $\sigma_{\max}$ and measured per topology in \cref{apte:sec:experiments}.

\begin{proposition}[Passive baseline]
\label{apte:prop:passive}
Let $Q_{\mathrm{break}}(\sigma)$ be monotonically increasing in $\sigma$~\cite{li2022securing,bergamaschi2025revisiting}. Against a passive adversary with budget $\overline{Q}$, the Stackelberg-optimal flooding is the smallest $\sigma$ satisfying $Q_{\mathrm{break}}(\sigma) \ge \overline{Q}$; the passive baseline keeps the key secret over the horizon if and only if this $\sigma$ does not exceed the control-tolerable level $\sigma_{\max}$.
\end{proposition}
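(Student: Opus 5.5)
The plan is to make the Stackelberg structure explicit and then reduce both claims to the monotonicity of $Q_{\mathrm{break}}$. First I fix the payoffs. The defender prefers smaller $\sigma$ among those that keep $s$ secret: flooding degrades actuation, so the cost is increasing in $\sigma$. The key stays secret exactly when the adversary's observations fall short of the recovery threshold. The adversary's strategy is a budget $Q \le \overline{Q}$. Observations are free and undetectable in the passive phase, and since $Q_{\mathrm{break}}$ is a threshold, recovery succeeds iff $Q \ge Q_{\mathrm{break}}(\sigma)$. Hence the follower's best response to any committed $\sigma$ is $Q^\ast = \overline{Q}$; extra observations never hurt it (weak dominance). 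Consequently the key survives the horizon iff $\overline{Q} < Q_{\mathrm{break}}(\sigma)$. I will adopt the statement's convention $Q_{\mathrm{break}}(\sigma)\ge\overline{Q}$, reading $Q_{\mathrm{break}}$ as the number of observations beyond which recovery succeeds. I will fix this boundary convention at the outset so the inequalities match.

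Next I characterize the leader's problem. Given the best response, the defender minimizes $\sigma$ subject to $Q_{\mathrm{break}}(\sigma)\ge\overline{Q}$ and $\sigma\le\sigma_{\max}$. Because $Q_{\mathrm{break}}$ is monotonically increasing, the set $S=\{\sigma : Q_{\mathrm{break}}(\sigma)\ge\overline{Q}\}$ is an upper interval $[\sigma^\ast,\infty)$ or $(\sigma^\ast,\infty)$. So the secrecy-feasible choices are exactly those above a threshold $\sigma^\ast=\inf S$, and the cost-minimizing secure commitment is the smallest such $\sigma$. Whether the minimum is attained depends on right-continuity of $Q_{\mathrm{break}}$. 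I expect this to be the main technical wrinkle. I would handle it by noting that $Q_{\mathrm{break}}$ is integer-valued and nondecreasing, so $S$ is closed from the left under the mild assumption that $Q_{\mathrm{break}}$ is upper semicontinuous. Alternatively, I would interpret ``smallest'' as an infimum, with an $\epsilon$-optimal commitment.

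Finally, the feasibility equivalence. ($\Leftarrow$) If $\sigma^\ast\le\sigma_{\max}$, committing $\sigma^\ast$ is control-admissible and yields $Q_{\mathrm{break}}(\sigma^\ast)\ge\overline{Q}$, so the adversary's best response fails and the key survives. ($\Rightarrow$) Suppose some admissible $\sigma\le\sigma_{\max}$ keeps the key secret against the best-responding adversary. Then $\sigma\in S$, so $\sigma^\ast\le\sigma\le\sigma_{\max}$. Contrapositively, if $\sigma^\ast>\sigma_{\max}$, monotonicity gives $Q_{\mathrm{break}}(\sigma)<\overline{Q}$ for every admissible $\sigma$, and spending the full budget recovers $s$. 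No passive commitment then works, which is the insufficiency condition that motivates the active game.
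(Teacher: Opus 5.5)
Your proposal is correct and follows the paper's argument: the adversary best-responds by spending the full budget; monotone control cost and monotone $Q_{\mathrm{break}}$ make the Stackelberg choice the smallest $\sigma$ with $Q_{\mathrm{break}}(\sigma)\ge\overline{Q}$; and that $\sigma$ is admissible exactly when it does not exceed $\sigma_{\max}$. Your handling of the boundary convention ($\ge$ versus $>$, which is off by one against the paper's later indicator $\lambda_a T \ge Q_{\mathrm{break}}$) and your use of upper semicontinuity to guarantee the minimum is attained are extra care the paper's proof omits, and both issues vanish for the continuous, strictly increasing closed form used later for $H_{\mathrm{rk}}$.
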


\begin{proof}
The adversary's success is monotone in its budget, so its best response to any committed $\sigma$ is to spend the full $\overline{Q}$. The defender therefore needs $Q_{\mathrm{break}}(\sigma) \ge \overline{Q}$, and since the control cost is increasing in $\sigma$ it commits the smallest such $\sigma$. Monotonicity of $Q_{\mathrm{break}}$ makes this $\sigma$ unique, and it is admissible exactly when it lies within the control budget $\sigma_{\max}$.
\end{proof}

\begin{remark}
\label{apte:rem:passive_calibration}
\Cref{apte:prop:passive} calibrates the passive phase and motivates the active one. It pins the leakage rate the active game consumes and exposes a sharp failure mode: when $\overline{Q}$ is large enough that the required flooding exceeds $\sigma_{\max}$, no static commitment keeps the key secret over the mission, and the defender must adapt.
\end{remark}

\subsection{The bridge to the active phase}
\label{apte:subsec:bridge}

\Cref{apte:rem:passive_calibration} is the reason Phase 2 exists. A persistent adversary's budget $\overline{Q}$ grows with the mission length, so for any fixed flooding the required $\sigma$ eventually exceeds $\sigma_{\max}$ and the passive baseline fails. The defender's recourse is to reset the adversary's accumulated observations before they reach $Q_{\mathrm{break}}(\sigma_{\max})$, which is precisely re-keying. Because de-flooding proceeds at one observation per cycle, $Q_{\mathrm{break}}(\sigma_{\max})$ is directly the re-key horizon in cycles. Whether that horizon is large enough to make the re-key cadence an interior decision, rather than every cycle or never, depends on the FHE precision; \cref{apte:sec:phase2} maps that dependence.

\section{Phase 2: The Active Phase and the Re-Keying Game}
\label{apte:sec:phase2}

Once the adversary turns active it becomes detectable, and the defender can adapt. We show that the adaptive defense is re-keying, that whether it is needed is governed by the FHE precision, and that the dynamic game is operationally meaningful only in a bounded precision window. Mapping that window, and characterizing the game inside it, is the central result of the paper.

\subsection{The adaptive move: re-keying}
\label{apte:subsec:rekey}

When the adversary manipulates the encrypted computation, the perturbation enters the control loop and surfaces in the residuals of a model-based detector~\cite{pasqualetti2013attack}, which we instantiate as a $\chi^2$ test on the observer innovation and whose detection profile we measure in \cref{apte:sec:experiments}. The defender's response is re-keying: it generates a fresh key and re-encrypts the fleet state, which resets the adversary's accumulated de-flooding progress to zero, at the cost of re-encrypting the state and a brief disruption to the loop. Bootstrapping cannot do this: by \cref{apte:subsec:ckks} it refreshes noise under the same key and leaves the leakage intact. From \cref{apte:sec:phase1}, leakage is the de-flooding of a fixed full-rank system at one observation per cycle, so the adversary reaches the recovery threshold $Q_{\mathrm{break}}(\sigma)$ after $Q_{\mathrm{break}}(\sigma)$ cycles, and a re-key restarts that clock.

\subsection{The re-key horizon and the precision window}
\label{apte:subsec:window}

At the control-tolerable flooding level $\sigma_{\max}$ (\cref{apte:sec:experiments}), the adversary reaches recovery after
\begin{equation}
\label{apte:eq:hrk}
H_{\mathrm{rk}} = Q_{\mathrm{break}}(\sigma_{\max}) = \frac{(\sigma_{\max}\,\Delta)^2}{12 \cdot 2^{s}\, e^2}
\end{equation}
cycles. The second equality uses the noise-flooding form deployed in OpenFHE,
\begin{equation}
\label{apte:eq:flood}
\sigma_{\mathrm{flood}}(\tau) = \sqrt{12\,\tau}\; 2^{s/2}\, (e/\Delta),
\end{equation}
which provides $s$-bit statistical security against $\tau$ observations; here $e$ is the magnitude of the ciphertext decryption error of \cref{apte:eq:ckks}, and the message-domain flooding is capped at $\sigma_{\max}$~\cite{li2022securing,guo2024key,bergamaschi2025revisiting}. The constant is the deployed flooding form, and its inputs $\sigma_{\max}$ and $e$ are measured (\cref{apte:sec:experiments}), so \cref{apte:eq:hrk} is instantiated rather than assumed; the three-regime structure below does not depend on the constant.

\Cref{apte:eq:hrk} exposes three regimes, instantiated with our measured $\sigma_{\max}$ for the ring topology and measured $e \approx 2^{14}$ (\cref{apte:sec:experiments}) at $s = 30$.

\begin{center}
\begin{tabular}{@{}lll@{}}
\toprule
precision regime & $H_{\mathrm{rk}}$ (cycles) & operative defense \\
\midrule
insecure, $\Delta \lesssim 2^{32}$ & $< 1$ & no feasible defense \\
window, $\Delta \approx 2^{32}$ to $2^{38}$ & $10^{2}$ to $10^{4}$ & re-key, interior cadence \\
vacuous, $\Delta \gtrsim 2^{40}$ & $\gg$ mission & static flooding suffices \\
\bottomrule
\end{tabular}
\end{center}

\begin{proposition}[Precision window]
\label{apte:prop:window}
Under the flooding form of \cref{apte:eq:hrk} with de-flooding at one observation per cycle (\cref{apte:sec:phase1}) and control-tolerable flooding $\sigma_{\max}$, the active re-keying game is operationally non-degenerate, meaning its optimal cadence is interior rather than every cycle or never, exactly when the precision $\Delta$ lies in a bounded window. The window is bounded below by control feasibility, where $H_{\mathrm{rk}}$ falls beneath the minimum disruption-free re-key interval, and above by static sufficiency, where $H_{\mathrm{rk}}$ exceeds the mission length and the passive baseline of \cref{apte:sec:phase1} already secures the key. The existence of the window is robust; its edges are fixed by the measured $\sigma_{\max}$ and $e$ together with the security target $s$ (\cref{apte:sec:experiments}).
\end{proposition}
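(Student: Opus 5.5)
The plan is to reduce the statement to monotonicity of $H_{\mathrm{rk}}$ in $\Delta$ and then locate two thresholds. From \cref{apte:eq:hrk}, with $\sigma_{\max}$, $e$ and $s$ held fixed, we have $H_{\mathrm{rk}}(\Delta) = c\,\Delta^2$ where $c = \sigma_{\max}^2/(12\cdot 2^{s} e^2) > 0$. This is continuous and strictly increasing, and it tends to $0$ as $\Delta \to 0$ and to $\infty$ as $\Delta \to \infty$. Write $h_{\min} \ge 1$ for the minimum disruption-free re-key interval and $T$ for the mission length, and assume $h_{\min} < T$. Then the condition $h_{\min} \le H_{\mathrm{rk}}(\Delta) \le T$ holds exactly on the interval
\[
\Delta_{\mathrm{lo}} = \sqrt{h_{\min}/c} \;\le\; \Delta \;\le\; \Delta_{\mathrm{hi}} = \sqrt{T/c}.
\]
This interval is bounded and nonempty, and its edges depend only on $\sigma_{\max}$, $e$, $s$, $h_{\min}$ and $T$. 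That gives the final claim about where the edges sit. Robustness also follows: changing the constant $12$ or the exponent in $2^{s}$ only rescales $c$. Any $Q_{\mathrm{break}}$ that is monotone and unbounded in $\Delta$ would give the same structure, with the thresholds defined by generalized inverses.

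The real content is the ``exactly when'' part: the optimal cadence is interior if and only if $\Delta$ lies in this interval. I would model the defender's cadence choice as choosing a re-key period $h \in \{h_{\min}, \dots\} \cup \{\infty\}$ ($\infty$ meaning never re-key). The cost is a re-key cost per cycle, of order $C_{\mathrm{rk}}/h$, plus a prohibitive loss if the adversary's ledger reaches $Q_{\mathrm{break}}(\sigma_{\max})$ before the reset. Because de-flooding proceeds at one observation per cycle (\cref{apte:subsec:bridge}), security requires $h \le H_{\mathrm{rk}}$. The cost $C_{\mathrm{rk}}/h$ is decreasing in $h$, so the optimum is the laziest secure period, $h^\star = H_{\mathrm{rk}}$. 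In the active phase the leakage rate is multiplied by $a_s$, which only rescales this to $H_{\mathrm{rk}}/a_s$ and leaves the argument intact. The proof then splits into three cases.
\begin{itemize}
\item If $\Delta < \Delta_{\mathrm{lo}}$, then $H_{\mathrm{rk}} < h_{\min}$, so no admissible period is secure. Even re-keying every cycle fails, which is the insecure, control-infeasible regime.
\item If $\Delta > \Delta_{\mathrm{hi}}$, then $H_{\mathrm{rk}} > T$. Taking $h = \infty$ is secure over the whole mission and costs nothing, so the answer is ``never re-key''. This also coincides with \cref{apte:prop:passive} holding with $\overline{Q} = T$.
\item In between, $h^\star = H_{\mathrm{rk}} \in [h_{\min}, T]$, which is an interior cadence.
\end{itemize}

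The main obstacle is making the word ``interior'' precise at the boundaries and against the timing-game structure. The adversary may choose when to go active, and detection may trigger early re-keys, so the optimal cadence is a Stackelberg equilibrium rather than a single-agent optimization. I would first argue that detection can only shorten the adversary's effective window, and that the adversary's best response is to stay stealthy, so the worst case is the passive ledger growth analyzed above. With that in place the equilibrium cadence reduces to $h^\star = \min(H_{\mathrm{rk}}, \infty\cdot\mathbf{1}[H_{\mathrm{rk}} > T])$. I would also treat the edge cases $H_{\mathrm{rk}} = h_{\min}$ and $H_{\mathrm{rk}} = T$ by convention, as closed endpoints. Finally, I would instantiate $c$ with the measured ring $\sigma_{\max}$, $e \approx 2^{14}$ and $s=30$ to recover the tabulated range $\Delta\approx 2^{32}$ to $2^{38}$.
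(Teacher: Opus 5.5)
Your proposal is essentially the paper's proof. $H_{\mathrm{rk}}=c\,\Delta^2$ is increasing in $\Delta$; the lower edge is where it drops below the minimum disruption-free re-key interval, the upper edge is where it exceeds the mission length so the passive baseline of \cref{apte:prop:passive} suffices, and the cadence is interior in between. You simply make the thresholds explicit and add the $C_{\mathrm{rk}}/h$ cost model that justifies choosing the laziest secure period.

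There are two small slips to fix. First, the stealthy posture leaks at rate $a_s>1$, so the binding ledger is not the passive one: the equilibrium cadence is the largest period with $a_s h < Q_{\mathrm{break}}(\sigma_{\max})$, as in \cref{apte:eq:tstar}. Second, your summary formula $\min(H_{\mathrm{rk}},\infty\cdot\mathbf{1}[H_{\mathrm{rk}}>T])$ should be a $\max$. As written it returns $H_{\mathrm{rk}}$ in exactly the case your second bullet correctly resolves as ``never re-key''.
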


\begin{proof}
$H_{\mathrm{rk}}$ in \cref{apte:eq:hrk} is increasing in $\Delta$. For $\Delta$ large enough that $H_{\mathrm{rk}}$ exceeds the mission length, the static commitment of \cref{apte:prop:passive} keeps the key secret with no re-keying, so the cadence is never. For $\Delta$ small enough that $H_{\mathrm{rk}}$ falls below the smallest re-key interval the loop tolerates without violating its spec, the defender cannot re-key often enough and the static baseline already fails, so the cadence is degenerate. Between these two monotone thresholds the optimal cadence is interior, which defines the window.
\end{proof}

\begin{corollary}[Securability floor]
\label{apte:cor:floor}
There is a minimum precision $\Delta_{\min}$ below which neither flooding within $\sigma_{\max}$ nor any re-key cadence the loop tolerates keeps the key secret: whenever $(\sigma_{\max}\Delta)^2 < 12 \cdot 2^{s} e^2$, the right-hand side of \cref{apte:eq:hrk} drops below one and the adversary recovers the key within a single cycle. With the measured $e \approx 2^{14}$, $\sigma_{\max} = 0.6$ for the ring, and $s = 30$, this floor is $\Delta_{\min} \approx 2^{32}$ (across the three topologies, $2^{31}$ to $2^{33}$): a securable encrypted control loop has a minimum FHE precision set by the control tolerance and the flooding law.
\end{corollary}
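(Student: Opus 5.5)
The plan is to read the floor directly off the closed form for the re-key horizon in \cref{apte:eq:hrk}. The proof then has three parts: a monotonicity argument that yields a threshold $\Delta_{\min}$, an argument that both defensive levers fail below it, and a numerical instantiation. First I would note that $H_{\mathrm{rk}}(\Delta) = (\sigma_{\max}\Delta)^2/(12\cdot 2^{s} e^2)$ is continuous and strictly increasing in $\Delta$, with $\sigma_{\max}$, $e$, and $s$ held fixed. The condition $H_{\mathrm{rk}} < 1$ is then equivalent to $\Delta < \Delta_{\min} := \sqrt{12\cdot 2^{s}}\, e/\sigma_{\max}$, which is exactly the stated inequality $(\sigma_{\max}\Delta)^2 < 12\cdot 2^{s} e^2$. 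This gives the existence of the floor and its explicit form.

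Second, I would show that below $\Delta_{\min}$ neither lever helps.

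\emph{Flooding.} Flooding is capped at $\sigma_{\max}$ by the control constraint, and $Q_{\mathrm{break}}$ is monotone increasing. So the best static commitment yields $Q_{\mathrm{break}}(\sigma) \le Q_{\mathrm{break}}(\sigma_{\max}) = H_{\mathrm{rk}} < 1$. Hence any budget $\overline{Q} \ge 1$ violates the condition of \cref{apte:prop:passive}, and the passive baseline fails.

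\emph{Re-keying.} Re-keying resets the ledger only between cycles. Because de-flooding proceeds at one observation per cycle (\cref{apte:sec:phase1}), the adversary needs fewer than one observation to cross threshold. That observation arrives within the very cycle the key is in use, so even re-keying every cycle leaves the key exposed. A fortiori, any cadence respecting the minimum disruption-free interval of \cref{apte:prop:window} fails. This is the lower-edge case of that proposition, pushed to its extreme.

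Third, I would instantiate the floor numerically. With $e = 2^{14}$, $s=30$, and $\sigma_{\max}=0.6$,
\[
\Delta_{\min} = \sqrt{12}\cdot 2^{15}\cdot 2^{14}/0.6 \approx 5.77\cdot 2^{29} \approx 2^{31.5},
\]
which rounds to roughly $2^{32}$. Repeating this with the per-topology measured $\sigma_{\max}$ values gives the range $2^{31}$ to $2^{33}$.

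The main obstacle is the re-keying step. The closed form gives a fractional horizon, and I must justify that a horizon below one cycle really defeats every cadence. This depends on the modeling convention that leakage is counted per released decryption and that a re-key cannot interpose within a cycle. I would make this explicit as the granularity assumption that a re-key takes effect only at cycle boundaries. A secondary caveat is that the numeric floor inherits the approximation $e\approx 2^{14}$, so the claimed $2^{31}$–$2^{33}$ range is only as precise as those measurements.
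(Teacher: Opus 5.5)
Your proposal is correct and takes the paper's route. The paper reads the floor off the closed form for $H_{\mathrm{rk}}$ as the extreme lower edge of the precision-window proposition, which gives $\Delta_{\min} = \sqrt{12\cdot 2^{s}}\,e/\sigma_{\max} \approx 2^{31.5}$ for the ring and roughly $2^{31.2}$ to $2^{32.5}$ across the three topologies. Your explicit assumption that a re-key takes effect only at cycle boundaries is a reasonable sharpening of what the paper leaves implicit in ``recovers the key within a single cycle.''
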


\subsection{The re-keying game and its equilibrium}
\label{apte:subsec:flipit}

Inside the window the cadence is a game. The defender re-keys to keep the leakage ledger in the safe set below $Q_{\mathrm{break}}(\sigma_{\max})$, a reach-avoid objective against an adversary driving it toward the recovery threshold, each re-key paying a re-encryption and disruption cost; the adversary chooses how hard to manipulate, trading faster recovery against a higher chance of detection. This is a detection-evasion timing game of the FlipIt family~\cite{van2013flipit,zheng2017reset}, instantiated on the re-keying of an encrypted control loop rather than an abstract resource, with the dynamical-state form of resource-takeover games~\cite{banik2022flipdyn}, the cloud signaling of advanced-persistent-threat models~\cite{pawlick2015flip}, the observed-agent variant of insider models~\cite{jiao2025bg}, and the detection side of intrusion games framed as optimal stopping~\cite{hammar2022learning}.

We solve it as a Stackelberg game with the defender leading. The defender commits to a blind re-key cadence $T$, re-keying every $T$ cycles, while the $\chi^2$ detector of \cref{apte:subsec:rekey} runs underneath and forces an unscheduled re-key whenever it fires. The follower chooses a posture $a \in \{\mathrm{p}, \mathrm{s}, \mathrm{g}\}$, passive, stealthy-active, or aggressive-active, with per-cycle leakage rate $\lambda_a \in \{1,\, a_s,\, a_g\}$ ($a_g > a_s > 1$, the stealthy rate being the chosen-ciphertext acceleration of \cref{apte:subsec:adversary}) and per-epoch detection probability $p_a \in \{\alpha,\, p_{\mathrm{s}},\, 1\}$: passive eavesdropping is invisible, stealthy manipulation sits just below the detector's knee, and aggressive manipulation is caught (\cref{apte:sec:experiments}). Each re-key resets the leakage ledger, so within one epoch the adversary recovers the key exactly when its accrual reaches the threshold,
\begin{equation}
\label{apte:eq:compromise}
\mathcal{K}(T, a) = \mathbf{1}\!\left[\lambda_a\, T \ge Q_{\mathrm{break}}(\sigma_{\max})\right] \ \text{for } a \in \{\mathrm{p}, \mathrm{s}\}, \qquad \mathcal{K}(T, \mathrm{g}) = 0,
\end{equation}
the aggressive case vanishing because detection forces a reset long before $Q_{\mathrm{break}}$. With $L_{\mathrm{key}}$ the defender's loss from a compromised key, $V_{\mathrm{key}}$ its value to the adversary, $c_{\mathrm{rk}}$ the disruption per re-key (so the mission's re-key cost scales as $1/T$), and $c_{\mathrm{det}}$ the detection penalty, the payoffs are
\begin{equation}
\label{apte:eq:payoffs}
U_D(T, a) = -\,L_{\mathrm{key}}\, \mathcal{K}(T, a) \;-\; \frac{c_{\mathrm{rk}}}{T}, \qquad U_A(T, a) = V_{\mathrm{key}}\, \mathcal{K}(T, a) \;-\; c_{\mathrm{det}}\, p_a,
\end{equation}
with $L_{\mathrm{key}} \gg c_{\mathrm{rk}}$, so keeping the key secret dominates disruption. The adversary's most damaging posture is stealthy: with $a_s > 1$ it compromises whenever $T \ge Q_{\mathrm{break}}/a_s$, a wider band than the passive $T \ge Q_{\mathrm{break}}$, while aggression is foiled and earns $-c_{\mathrm{det}}$. Anticipating this best response, the leader commits the laziest cadence outside that band,
\begin{equation}
\label{apte:eq:tstar}
T^\star = \max\{T : a_s\, T < Q_{\mathrm{break}}(\sigma_{\max})\}.
\end{equation}

\Cref{apte:fig:decisionmatrix} solves this per topology on one shared cadence grid. The topology enters only through the measured $Q_{\mathrm{break}}(\sigma_{\max})$ (\cref{apte:sec:experiments}: ring $1966$, torus $492$, complete $3072$ cycles at the operating precision $\Delta = 2^{37}$), so the same grid yields a different equilibrium per topology, and that shift is the headline. The marginally-stable torus, with the smallest $\sigma_{\max}$ and hence the smallest $Q_{\mathrm{break}}$, is forced to the most aggressive cadence (re-key every $200$ cycles); the ring needs a moderate cadence ($600$); the well-connected complete graph tolerates a lazy one ($1500$). Control-theoretic fragility sets the cryptographic re-key cadence.

Two readings are load-bearing. First, the aggressive column is dominated by detection in every topology: a magnitude-only active attack in our linear-Gaussian loop is caught before it can compromise the key (\cref{apte:sec:experiments}), so the rational adversary stays stealthy and free-rides on passive accumulation. This is a positive security result, and it locates the genuine Phase-2 contest in the stealthy leakage race against the re-key cadence rather than a crash-versus-detect tradeoff. Second, the detector is what makes re-keying cheap: with the measured detection profile the defender holds the key roughly fifteen times longer between re-keys than it could without detection, so re-keying is an occasional, low-cost defense rather than a constant one.

\begin{figure}[t]
\centering
\includegraphics[width=\textwidth]{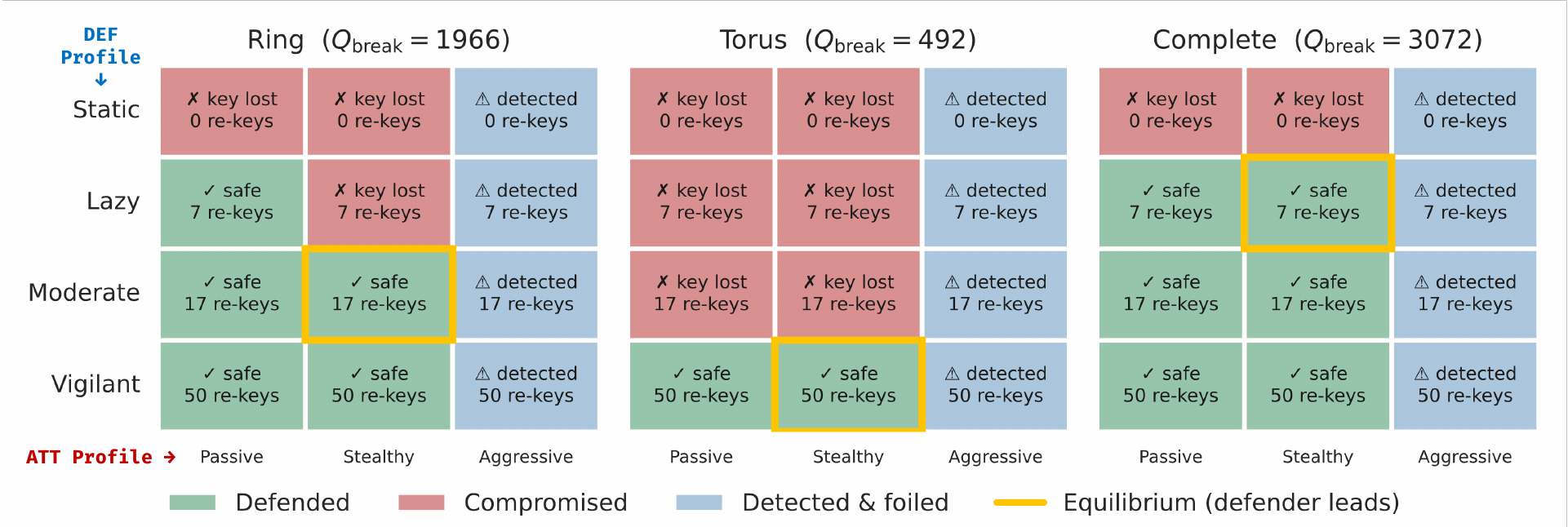}
\caption{The per-topology re-keying game. Rows are the defender's blind re-key cadence (Static, Lazy, Moderate, Vigilant); columns are the adversary's posture. Each cell gives the outcome (defended, key lost, or detected and foiled) and the re-keys per $10^4$-cycle mission. The Stackelberg equilibrium (gold) shifts with the topology's measured $Q_{\mathrm{break}}$: the marginally-stable torus is forced to Vigilant re-keying, the ring to Moderate, the complete graph to Lazy. The Aggressive column is dominated by detection, so the rational adversary stays stealthy.}
\label{apte:fig:decisionmatrix}
\end{figure}

\subsection{The three-way tension}
\label{apte:subsec:tension}

Three constraints bind simultaneously in the window. FHE precision is a cost: a larger $\Delta$ widens the static-safe margin in \cref{apte:eq:hrk} but is more costly to evaluate. Flooding is bounded by control accuracy: $\sigma_{\max}$ caps the noise the loop tolerates within specification (\cref{apte:sec:phase1,apte:sec:experiments}). Re-keying is bounded by disruption: each reset perturbs the loop. At high precision the first constraint dominates and static flooding wins; below the securability floor of \cref{apte:cor:floor} no defense within control spec keeps the key secret; between the floor and the static-sufficiency ceiling the cadence game is the operative defense. Stated as a design principle, this is the paper's central claim: the efficient secure operating point lies in this window, far below the over-provisioned reference precision yet above the securability floor, and it is precisely there that static flooding no longer suffices and re-keying becomes the operative defense.

\subsection{Scope and open items}
\label{apte:subsec:phase2_scope}

We close with the scope of the result. The magnitudes in \cref{apte:eq:hrk} use the OpenFHE flooding form with measured inputs $\sigma_{\max}$ and $e$ (\cref{apte:sec:experiments}), and the conservative one-per-cycle leakage rate is verified rather than assumed: the closed-loop contraction does not slow it, since each cycle injects a fresh measurement under the same key, a new independent constraint the contraction never touches (\cref{apte:sec:experiments}). The analysis is for full observation ($\kappa = M$); the partial-observation infection threshold $\kappa^\star/M$ is the extension of \cref{apte:sec:conclusion}. The action sets and the stealthy acceleration $a_s$ enter as threat-model parameters, with $a_s > 1$ grounded on chosen-ciphertext access (\cref{apte:subsec:adversary}); from them the framework returns the equilibrium structure of \cref{apte:fig:decisionmatrix}. Because $a_s$ is common across topologies, the cross-topology ordering of the equilibrium, the headline of \cref{apte:subsec:flipit}, does not depend on its magnitude, which rescales the absolute cadences but not which topology must re-key more often. The defender is the cooperative fleet of \cref{apte:subsec:defender}.

\section{Experiments}
\label{apte:sec:experiments}

Our experiments establish four things. The encrypted coordination loop tolerates a measurable, topology-dependent flooding level $\sigma_{\max}$ within its performance specification (\cref{apte:subsec:exp_sigmamax}), unchanged on the encrypted backend (\cref{apte:subsec:exp_backend}). Leakage accumulates at one observation per cycle on a full-rank system, and the closed-loop contraction does not protect the key (\cref{apte:subsec:exp_rank}). Active manipulation is detectable while passive eavesdropping is not, the hinge that separates the two phases (\cref{apte:subsec:exp_detect}). And the FHE precision induces the three regimes of \cref{apte:sec:phase2}, whose per-topology re-keying equilibria we solve (\cref{apte:subsec:exp_map}).

\subsection{Setup}
\label{apte:subsec:exp_setup}

We use the nine-agent fleet of~\cite{damera2026endtoendencryptedcontrolpipeline} on the OpenFHE.jl stack~\cite{schlottkelakemper2024openfhejulia}, with the formation instance of the cooperative law of \cref{apte:eq:consensus}, $x^{+} = x - \varepsilon (L \otimes I_n) x - \gamma (x - r)$, where $r$ is the formation reference and $\gamma$ the attraction gain. We set $M = 9$, $n = 4$, grid side $2.0$, $\gamma = 0.2$, and $\varepsilon = 0.3$ for the ring and torus and $\varepsilon = 0.1$ for the complete graph. To measure the control-tolerable flooding we add zero-mean Gaussian noise of standard deviation $\sigma$ (message domain) to the decrypted command each cycle, and we call $\sigma_{\max}$ the largest $\sigma$ for which the windowed position RMS error stays within $25\%$ of the noiseless floor over a horizon of $50$ cycles, averaged over $40$ trials per cell. The sweep runs on the plaintext loop and is confirmed on the encrypted OpenFHE backend (the reference parameter set, with bootstrapping).

\subsection{Control-tolerable flooding and topology}
\label{apte:subsec:exp_sigmamax}

\Cref{apte:tab:sigmamax} reports $\sigma_{\max}$ per topology. It tracks the closed-loop spectral radius $\rho_{\mathrm{cl}}$: the more stable the loop, the more flooding it absorbs. The torus is marginally stable at its nominal $\varepsilon = 0.3$ (its largest Laplacian mode places $\rho_{\mathrm{cl}} = 1$, so injected noise accumulates), which makes it the most fragile of the three, and a stabilizing $\varepsilon = 0.2$ restores it. The topology dependence of $\sigma_{\max}$ is therefore a control-theoretic effect, and it is the channel through which network structure enters the regime map.

\begin{table}[t]
\centering
\caption{Control-tolerable flooding $\sigma_{\max}$ by topology. $\sigma_{\max}$ tracks the closed-loop spectral radius $\rho_{\mathrm{cl}}$.}
\label{apte:tab:sigmamax}
\begin{tabular}{@{}lcccc@{}}
\toprule
topology & $\varepsilon$ & $\rho_{\mathrm{cl}}$ & noiseless floor (m) & $\sigma_{\max}$ (m) \\
\midrule
ring & 0.3 & 0.80 & 1.13 & 0.60 \\
torus & 0.3 & 1.00 (marginal) & 1.78 & 0.30 \\
torus & 0.2 & 0.80 & 1.23 & 0.75 \\
complete & 0.1 & 0.80 & 1.34 & 0.75 \\
\bottomrule
\end{tabular}
\end{table}

\subsection{Backend independence of \texorpdfstring{$\sigma_{\max}$}{sigma-max}}
\label{apte:subsec:exp_backend}

The encrypted ring loop matches the plaintext reference to six digits (position RMS error $1.131206$ plaintext versus $1.131207$ encrypted), and the intrinsic encrypted-versus-plaintext trajectory error is $7.17 \times 10^{-6}$, five orders of magnitude below $\sigma_{\max} = 0.6$. The encrypted computation's own error is therefore negligible against the control tolerance, so $\sigma_{\max}$ is backend-independent and the plaintext sweep of \cref{apte:tab:sigmamax} transfers to the encrypted loop without change.

\subsection{Leakage rate}
\label{apte:subsec:exp_rank}

We confirm the leakage model of \cref{apte:sec:phase1,apte:sec:phase2}. Stacking the negacyclic matrices of the released first ciphertext components, the system in $s$ is full-rank from a single ciphertext, and the rank stays constant across cycles in both the in-place consensus stage and the fresh-measurement estimation stage, with bootstrapping leaving it unchanged. Leakage is therefore the de-flooding of a fixed full-rank system at one observation per cycle, the one-to-one conversion used throughout. Rank alone cannot tell whether the closed-loop contraction degrades the per-cycle information, so we track the Fisher information $\sum_k M(k)^\top M(k)$ of the released stream about $s$. The in-place consensus stage alone saturates, so there the contraction does protect the key; but the fresh-measurement estimation and the full closed loop grow the information linearly in time, because each cycle encrypts a fresh measurement under the same key, a new independent constraint the contraction never touches. The closed loop is therefore in the conservative regime, with contraction not protecting the key, and the one-per-cycle rate verified rather than assumed. This is for full observation ($\kappa = M$).

\subsection{The detectability hinge}
\label{apte:subsec:exp_detect}

The two-phase split rests on active manipulation being detectable while passive eavesdropping is not. We instantiate the model-based detector of \cref{apte:subsec:rekey} as a $\chi^2$ test on the $18$-dimensional observer innovation, calibrated on the nominal loop at a false-alarm rate $\alpha = 0.01$ (threshold $34.7$ against a nominal mean statistic of $18$). Passive eavesdropping ($\delta = 0$) is flagged at $0.0104 \approx \alpha$ per cycle: it is invisible. An active perturbation of magnitude $\delta$ is detected with probability $0.05$ at $\delta = 0.2$, $0.40$ at $\delta = 0.4$, and $0.998$ at $\delta = 0.8$, a knee at $\delta \approx 0.4$. The two-phase split is therefore measured, not asserted: meaningful active manipulation is caught within a few cycles, which foils the aggressive posture and leaves the stealthy posture, just below the knee, as the adversary's only viable active play.

\subsection{The precision regime map and the equilibrium}
\label{apte:subsec:exp_map}

Instantiating \cref{apte:eq:hrk} with the measured $\sigma_{\max}$ (\cref{apte:tab:sigmamax}), $e \approx 2^{14}$, and $s = 30$ places the securability floor at $\Delta_{\min} \approx 2^{32}$ (\cref{apte:cor:floor}), the re-key window at $\Delta \approx 2^{32}$ to $2^{38}$ where $H_{\mathrm{rk}}$ runs from hundreds to thousands of cycles, and static sufficiency above $\Delta \approx 2^{40}$; the reference parameter set ($\Delta \approx 2^{59}$) sits deep in the vacuous regime. \Cref{apte:fig:phasediagram} plots this as a phase diagram over precision and control tolerance: the three topologies are horizontal lines at their measured $\sigma_{\max}$, and at the precision-efficient operating point $\Delta = 2^{37}$ all three sit inside the re-key window, with the marginally-stable torus's window shifted highest ($2^{34}$ to $2^{40}$) because its small $\sigma_{\max}$ raises its floor. At $\Delta = 2^{37}$ the recovery horizons are $H_{\mathrm{rk}} = 1966$ (ring), $492$ (torus), and $3072$ (complete) cycles, the $Q_{\mathrm{break}}$ values that drive the equilibrium of \cref{apte:fig:decisionmatrix}. \Cref{apte:fig:ledger} shows the ring ledger: the equilibrium cadence resets the accumulated leakage well below $Q_{\mathrm{break}}$, while a lazy cadence crosses it and the key is lost.

\begin{figure}[t]
\centering
\begin{subfigure}{0.47\textwidth}
\centering
\includegraphics[width=\textwidth]{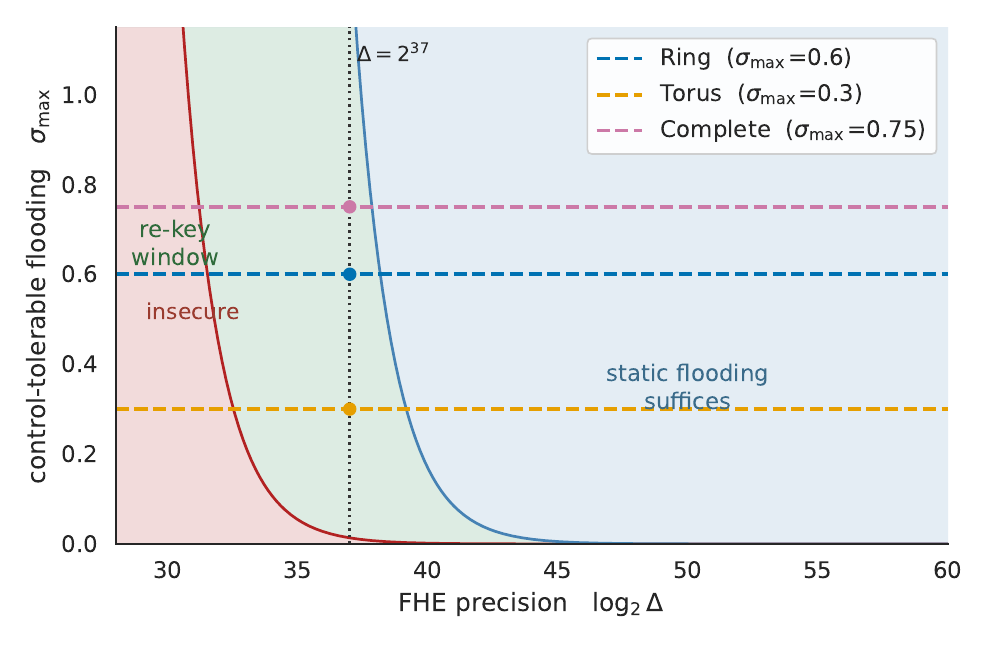}
\caption{precision phase diagram}
\label{apte:fig:phasediagram}
\end{subfigure}
\hfill
\begin{subfigure}{0.52\textwidth}
\centering
\includegraphics[width=\textwidth]{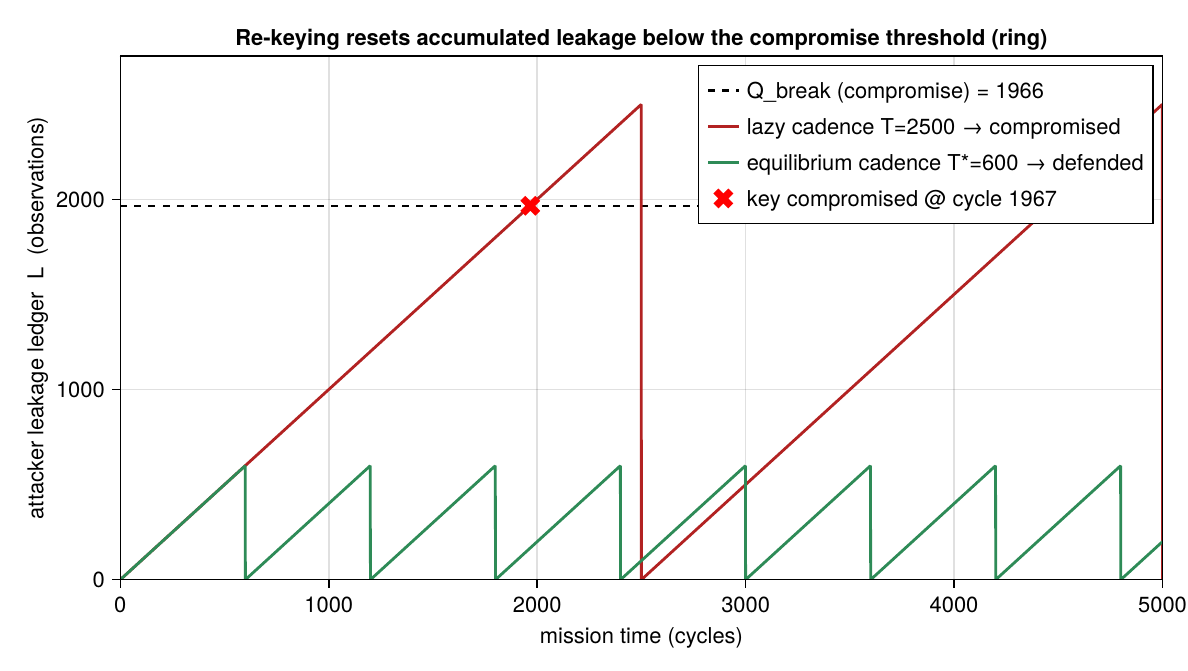}
\caption{ring leakage ledger}
\label{apte:fig:ledger}
\end{subfigure}
\caption{(a)~The precision regime map as a phase diagram over FHE precision $\Delta$ and control-tolerable flooding $\sigma_{\max}$: insecure (red), re-key window (green), and static-suffices (blue), with the three topologies as horizontal lines and the operating point $\Delta = 2^{37}$ marked. (b)~The ring leakage ledger ($Q_{\mathrm{break}} = 1966$): the equilibrium cadence $T^\star = 600$ keeps the ledger below the compromise threshold (defended), while a lazy $T = 2500$ crosses it at cycle $1967$ (key lost).}
\label{apte:fig:regimemap}
\end{figure}

\subsection{Scope of the empirical claims}
\label{apte:subsec:exp_validity}

The control result is backend-confirmed (\cref{apte:subsec:exp_backend}). The regime map and the equilibrium use the OpenFHE flooding form with measured inputs $\sigma_{\max}$ and $e$. The contraction result rests on a structural fact: each cycle encrypts a fresh measurement under the same key, giving a constant information rate independent of the contraction. The full-rank, one-per-cycle leakage is established for full observation; the partial-observation threshold is left to \cref{apte:sec:conclusion}.

\section{Conclusion}
\label{apte:sec:conclusion}

We modeled the security of encrypted multi-agent control under an advanced persistent threat as two games linked by a leakage ledger. The passive phase reduces to the established noise-flooding tradeoff, which calibrates the leakage rate the active game consumes; that leakage is the de-flooding of a full-rank system at one observation per cycle, and we verify that the closed-loop contraction does not slow it. The active phase is a Stackelberg re-keying game, solved per topology: a measured $\chi^2$ residual detector foils aggressive manipulation, so the rational adversary stays stealthy, and the defender re-keys on the laziest cadence that still denies it. The control-theoretic fragility of the topology sets that cadence, with the marginally-stable torus forced to the most aggressive re-keying and the well-connected complete graph tolerating the laziest. Governing all of it is a three-way tension among FHE precision, control accuracy, and re-key cadence: below a securability floor near $\Delta \approx 2^{32}$ no defense within the control spec keeps the key secret, in a bounded window above it re-keying is the interior operative defense, and above the window static flooding suffices. The control-tolerable flooding tracks closed-loop stability and is unchanged on the encrypted backend; the reference precision sits far inside the vacuous regime.

Two scope conditions bound these claims. The analysis is for full observation, $\kappa = M$, with a single cooperative-fleet defender rather than a set of strategic agents. And the adversary's action set and acceleration enter as threat-model parameters that a security analyst calibrates, with the framework returning the equilibrium structure from them.

These bound the natural next steps. The partial-observation infection threshold $\kappa^\star / M$, the fraction of physically observed agents at which security and control become jointly infeasible, is the first extension and the one that would restore a fleet-level threshold to the headline; its cryptographic counterpart is the per-agent-key release that recovers only a key share rather than the whole secret. Beyond the cooperative model lie strategic and corrupted agents as players in their own right, and the threshold and multi-key settings in which the single shared key, the structural weakness this paper turns into a game, is itself distributed.



\bibliographystyle{splncs04unsrt}
\bibliography{references}

\end{document}